\newtheorem{theorem}{Theorem}
\newtheorem{corollary}{Corollary}
\newtheorem{lemma}{Lemma}
\newcommand{\ignore}[1]{}
\newcommand{\F}{\mathbb{F}}
\newcommand{\ldis}{{d}_\mathrm{L}}
\newcommand{\D}{\mathsf{D}}
\newcommand{\I}{\mathsf{I}}
\begin{document}

\title{Improved Bounds for Codes \\Correcting Insertions and Deletions}

\author{
    Kenji Yasunaga\thanks{Tokyo Institute of Technology, Japan. E-mail: \texttt{yasunaga@c.titech.ac.jp}}
}

\maketitle

\begin{abstract}
  This paper studies the cardinality of codes correcting insertions and deletions.
  We give improved upper and lower bounds on code size.
  Our upper bound is obtained by utilizing the asymmetric property of list decoding for insertions and deletions
  and can be seen as analogous to the Elias bound in the Hamming metric.
  Our non-asymptotic bound is better than the existing bounds when the minimum Levenshtein distance is relatively large.
  The asymptotic bound exceeds the Elias and the MRRW bounds adapted from the Hamming-metric bounds for the binary and the quaternary cases.
  Our lower bound improves on the bound by Levenshtein, but its effect is limited and vanishes asymptotically.
\end{abstract}

\section{Introduction}

We study the existence of codes correcting insertions and deletions.
We are interested in deriving upper and lower bounds on the cardinality of codes.

Levenshtein~\cite{Lev66} gave asymptotic upper and lower bounds for codes
correcting a constant number of insertions and deletions.
Later, he presented bounds for correcting any number~\cite{Lev02}.
Following his work, there have been several studies~\cite{KMTU11,KK13,CK14} on the cardinality of codes.
However, they mainly focused on codes correcting a constant number of insertions/deletions.
Giving better bounds for codes correcting a constant \emph{fraction} of insertions/deletions has been elusive.
See~\cite{CR21} for a recent survey.

In this work, we present upper and lower bounds on the cardinality of codes correcting insertions and deletions.
First, we give a non-asymptotic upper bound on the cardinality of codes correcting insertions/deletions.
Asymptotically, it implies that for any code $C \subseteq \Sigma^n$ of rate $R$ that can correct $\lfloor \delta n \rfloor - 1$ insertions/deletions,
it holds that  $R \leq (1 - H_q(\delta))/(1-\delta)$,
where $|\Sigma| = q$, $\delta \in [0,1)$, and $H_q(\cdot)$ is the $q$-ary entropy function.
The bound improves on the asymptotic upper bounds from the literature; 
The well-known Elias and MRRW bounds for the Hamming metric can also be employed as upper bounds in the Levenshtein metric (insertions and deletions).
Our bound improves on them for $q = 2, 4$.

Our bound is obtained by a similar argument  to the Elias bound in the Hamming metric.
We use the list size upper bound of~\cite{HY20} for insertions and deletions.
It is well-known that any $s$-deletion correcting code  can correct any $s_1$ insertions and $s_2$ deletions with $s_1+s_2=s$.
This symmetry in the unique decoding regime does not hold in the list decoding problem.
In~\cite{HY20}, it is proved that any code with a large Levenshtein distance enables list decoding
such that  the decoding radius of insertion is superior to that of deletion.
We crucially use this property to derive our upper bound.

Next, we give a non-asymptotic lower bound on the cardinality of codes.
Our bound improves on the bound by Levenshtein~\cite{Lev02} by investigating that every deletion ball contains multiple words
that are close to each other in the Levenshtein distance.
Asymptotically, our lower bound is the same as in~\cite{Lev02}.

Finally, we compare our bounds with the existing bounds in the literature.
As a non-asymptotic bound, our upper bound is tighter than the bounds in~\cite{Lev02,KK13} when the minimum distance is relatively large.
Our asymptotic upper bound is the best compared with the existing bounds from the Hamming metric for the binary and the quaternary codes.
Regarding lower bounds, although our bound strictly improves the bound in~\cite{Lev02}, its effect is limited and vanishes asymptotically.

\subsection*{Related Work}

The existence of positive rate codes correcting a constant fraction of insertions/deletions was presented by Schulman and Zuckerman~\cite{SZ99}.
Their asymptotic lower bounds  were later improved by Levenshtein~\cite{Lev02}.

Cullina and Kiyavash~\cite{CK14} improved Levenshtein's upper bound~\cite{Lev02} for correcting a constant number of insertions/deletions
using a graph-theoretic approach.
Kulkarni and Kiyavash~\cite{KK13} derived non-asymptotic upper bounds by a linear programming argument for graph-matching problems.
They also gave upper bounds for correcting a constant fraction of insertions/deletions.
Although their asymptotic bound (rate function) improved on the bound in~\cite{Lev02}, it was not given in the closed-form expressions.

For extreme cases where the deletion fraction is either small or high,
Guruswami and Wang~\cite{GW17} gave efficient constructions of codes correcting these cases of deletions.
For the case that the coding rate is nearly zero,
Kash et al.~\cite{KMTU11} showed 
a positive-rate binary code correcting the fraction $p$ of insertions/deletions with $p \geq 0.1737$,
which improved on the bound of $p \geq 0.1334$ in~\cite{Lev02}.
Bukh, Guruswami, and Hast\r{a}d~\cite{BGH17} significantly improved on the previous results by showing
the existence of a positive rate $q$-ary code with $p \geq 1 - (2/(q+\sqrt{q}))$,
which is $p \geq 0.4142$ for the binary case.
Guruswami, He, and Li~\cite{GHL21} showed a slight but highly non-trivial improvement
to the upper bound on the fraction of correctable insertions/deletions by codes with a non-zero coding rate.
They proved that there is a constant $\epsilon >0$ such that any $q$-ary code correcting a $(1 - (1+\epsilon)/q)$-fraction of
insertions/deletions must have a rate approaching $0$.

Maringer et al.~\cite{MPVW21} studied insertion/deletion correcting codes in feedback models
and completely determined the maximal asymptotic rate for binary codes.

\section{Code Size Upper Bound}

Let $\Sigma$ be a finite alphabet. 
The Levenshtein distance $\ldis(x, y)$ between two words $x$ and $y$ is
the minimum number of symbol insertions and deletions needed to transform $x$ into $y$.
For a code $C \subseteq \Sigma^n$, its minimum Levenshtein distance is
 the minimum distance $\ldis(c_1, c_2)$ of every pair of distinct codewords $c_1, c_2 \in C$.
 Since any two codewords in $C$ are of the same length, the minimum Levenshtein distance of $C$ takes an even number.
 It is well-known that a code with minimum Levenshtein distance $d$ can correct any $s_1$ insertions and $s_2$ deletions
 as long as $s_1 + s_2 \leq d/2 - 1$.
 The Levenshtein distance between two words in $\Sigma^n$ takes integer values from $0$ to $2n$.
 Thus, we consider the normalized Levenshtein distance $\delta = d/2n$ in the analysis.
 The value $\delta \in [0,1]$ also represents the fraction of insertions/deletions that can be corrected
 since we require $(s_1+s_2)/n \leq (d/2 - 1)/n = \delta - 1/n$, which is asymptotically equal to $\delta$.

Let $C \subseteq \Sigma^n$ be a code of minimum Levenshtein distance $d$ with $|\Sigma|=q$.
For a word $x \in \Sigma^n$,
let $I_t(x) \subseteq \Sigma^{n+t}$ be the set of its supersequences of length $n+t$.
Namely, $I_t(x)$ consists of words obtained from $x$ by inserting $t$ symbols.
Similarly, let $D_t(x)$ be the set of words obtained from $x$ by deleting $t$ symbols.
It is known~\cite{Lev01} that the size of $I_t(x)$ does not depend on $x$ and 
\[|I_t(x)| = \sum_{i=0}^t \binom{n+t}{i}(q-1)^i \triangleq I_q(n,t).\]

First, we give a simple sphere packing bound.
We use the fact that the number of supersequences, $|I_t(x)|$, is independent of $x$.

\begin{theorem}\label{thm:spbound}
  Let $C \subseteq \Sigma^n$ be a code of minimum Levenshtein distance $d$ and $|\Sigma|=q$.
  It holds that
  \begin{equation}
    |C| \leq \left\lfloor\frac{q^{n+d/2-1}}{I_q(n,d/2-1)}\right\rfloor.
    \end{equation}
\end{theorem}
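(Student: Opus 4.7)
The plan is to prove this by a standard sphere-packing argument, but using \emph{insertion} balls rather than Levenshtein balls, so that we can directly invoke the fact that $|I_t(x)|$ does not depend on $x$. Set $t = d/2 - 1$ (an integer, since $d$ is even). The goal is to show that the sets $\{I_t(c)\}_{c \in C}$ form a pairwise disjoint family inside $\Sigma^{n+t}$; dividing $|\Sigma^{n+t}| = q^{n+t}$ by the common cardinality $I_q(n,t)$ and taking the floor will then yield the stated inequality.

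The one nontrivial step is the disjointness claim, and this is where I would focus. Suppose for contradiction that there exist distinct codewords $c_1, c_2 \in C$ and some $y \in I_t(c_1) \cap I_t(c_2)$. Then $y \in \Sigma^{n+t}$ is obtainable from $c_1$ by $t$ insertions, and equivalently $c_1$ is obtainable from $y$ by $t$ deletions. The same holds for $c_2$. Concatenating these two transformations gives a sequence of $t$ insertions followed by $t$ deletions that takes $c_1$ to $c_2$, hence
\[
\ldis(c_1, c_2) \;\leq\; 2t \;=\; d - 2 \;<\; d,
\]
contradicting the minimum Levenshtein distance of $C$. Therefore the insertion balls $I_t(c)$ for $c \in C$ are indeed pairwise disjoint.

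Having established disjointness, the rest is a counting argument:
\[
|C| \cdot I_q(n, t) \;=\; \sum_{c \in C} |I_t(c)| \;=\; \Bigl| \bigcup_{c \in C} I_t(c) \Bigr| \;\leq\; \bigl| \Sigma^{n+t} \bigr| \;=\; q^{n+t}.
\]
Dividing by $I_q(n,t)$ and substituting $t = d/2 - 1$ gives $|C| \leq q^{n+d/2-1}/I_q(n, d/2-1)$, and the floor is justified because $|C|$ is an integer.

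The main obstacle, such as it is, is simply the need to recognize that the natural choice of ball is the insertion ball rather than a Levenshtein ball of radius $d/2-1$: the former has a known closed-form size $I_q(n,t)$ independent of center, whereas Levenshtein balls do not. Once that is recognized, the disjointness follows immediately from the definition of Levenshtein distance as the minimum cost in insertions \emph{and} deletions combined, so no further work is needed.
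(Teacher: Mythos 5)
Your proof is correct and follows exactly the paper's argument: pack the pairwise-disjoint insertion balls $I_{d/2-1}(c)$ into $\Sigma^{n+d/2-1}$ and use the fact that $|I_t(x)| = I_q(n,t)$ is independent of $x$. The only difference is that you spell out the disjointness step (via $\ldis(c_1,c_2) \leq 2t = d-2 < d$), which the paper merely asserts.
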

\begin{proof}
  For each codeword $c \in C$, consider the set of supersequences $I_{d/2-1}(c) \subseteq \Sigma^{n+d/2-1}$.
  Since the code has minimum Levenshtein distance $d$, each $I_{d/2-1}(c)$ should be disjoint. Thus,
  \[ \sum_{c \in C} |I_{d/2-1}(c)| \leq q^{n+d/2-1}.\]
  The statement follows by the equality $|I_{d/2-1}(c)| = I_q(n, d/2-1)$ for any $c \in C$.
\end{proof}

Next, we prove our main theorem,
which can be seen as an Elias-type upper bound on the code size in the Levenshtein metric.

\begin{theorem}\label{thm:bound}
  Let $C \subseteq \Sigma^n$ be a code of minimum Levenshtein distance $d < 2n$ and $|\Sigma|=q$.
  For any integer $t \geq 0$ with
  \begin{equation}
    t < \frac{nd}{2n-d},\label{eq:cond}
  \end{equation}
  it holds that
  \begin{equation}
    |C|\leq \left\lfloor\frac{(n+t)d}{(n+t)d - 2nt}\cdot \frac{q^{n+t}}{I_q(n,t)}\right\rfloor. \label{eq:bound}
    \end{equation}
\end{theorem}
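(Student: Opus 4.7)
The plan is to carry out an Elias-style double-counting argument, but with the Hamming ball replaced by the set of length-$(n+t)$ supersequences of a codeword, and with the list-size bound from~\cite{HY20} playing the role of the Johnson bound.

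The first step is to invoke the asymmetric list-decoding result of~\cite{HY20}: for any $y \in \Sigma^{n+t}$ and any code $C \subseteq \Sigma^n$ of minimum Levenshtein distance $d$, the number of codewords $c \in C$ with $y \in I_t(c)$ (equivalently, $c$ obtainable from $y$ by $t$ deletions) is at most
\[
L \;=\; \frac{(n+t)d}{(n+t)d - 2nt},
\]
provided the denominator is positive; this positivity is precisely the condition~\eqref{eq:cond}, since $(2n-d)t < nd$ rearranges to $(n+t)d > 2nt$. I would state this as a direct appeal to~\cite{HY20}, taking care that we are using the regime in which the decoder sees $t$ insertions and $0$ deletions, which is exactly the regime where the list-size bound is strongest.

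The second step is a standard double count of the bipartite incidences
\[
\bigl\{ (c, y) : c \in C,\ y \in \Sigma^{n+t},\ y \in I_t(c) \bigr\}.
\]
Summing over $c \in C$ and using that $|I_t(c)| = I_q(n,t)$ is independent of $c$ gives $|C|\cdot I_q(n,t)$ incidences. Summing over $y \in \Sigma^{n+t}$ and applying the list-size bound gives at most $q^{n+t}\cdot L$ incidences. Equating the two sides yields
\[
|C| \;\leq\; \frac{L\cdot q^{n+t}}{I_q(n,t)} \;=\; \frac{(n+t)d}{(n+t)d - 2nt}\cdot\frac{q^{n+t}}{I_q(n,t)},
\]
and taking the floor is justified because $|C|$ is an integer.

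The argument is structurally short once the list-size bound is in hand, so the main obstacle is simply making sure we quote the correct form of the~\cite{HY20} result and that the bound applies uniformly to every $y \in \Sigma^{n+t}$ (not merely to ``typical'' $y$). A secondary, purely bookkeeping point is verifying the algebraic equivalence of~\eqref{eq:cond} with $(n+t)d - 2nt > 0$, so that both the list-size bound and the final estimate make sense; this requires no more than the one-line rearrangement above. No careful case analysis of insertions-versus-deletions is needed inside this proof, since the asymmetry is already packaged into the~\cite{HY20} bound.
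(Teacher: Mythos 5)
Your proposal is correct and follows essentially the same route as the paper: the paper also double-counts the incidences between codewords and their length-$(n+t)$ supersequences (phrased as an averaging argument over $y \in \Sigma^{n+t}$) and then applies the list-size bound of~\cite{HY20} with $t_\I = t$, $t_\D = 0$, which is exactly the quantity $|D_t(y) \cap C|$ you bound. Your verification that condition~\eqref{eq:cond} is equivalent to $(n+t)d - 2nt > 0$ matches the role that hypothesis plays in the paper.
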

\begin{proof}
  By double counting, 
 the sum of the cardinalities of $I_t(x)$ of all $x \in \Sigma^n$ is
 equal to the sum of the cardinalities of $D_t(y)$ of all $y \in \Sigma^{n+t}$.
 Namely,
\begin{align}
  \sum_{y \in \Sigma^{n+t}}|D_t(y)| = \sum_{x \in \Sigma^n}|I_t(x)| = q^n \cdot I_q(n,t).\label{eq:leveq}
  \end{align}
By considering the intersection with $C$,
\[ \sum_{y \in \Sigma^{n+t}}|D_t(y) \cap C| = \sum_{x \in C}|I_t(x)| = |C| \cdot I_q(n,t).\]

Thus, by choosing $y \in \Sigma^{n+t}$ uniformly at random, 
\begin{align*}
  \mathbb{E}_{y}[ |D_t(y) \cap C|] & = \frac{1}{q^{n+t}} \sum_{y \in \Sigma^{n+t}} |D_t(y) \cap C|
   = \frac{|C|\cdot I_q(n,t)}{q^{n+t}}.
\end{align*}
The averaging argument implies that there exists $y \in \Sigma^{n+t}$ such that
\begin{equation}
  |D_t(y) \cap C| \geq \frac{|C|\cdot I_q(n,t)}{q^{n+t}}.\label{eq:ave}
\end{equation}

We have the following lemma.
\begin{lemma}\label{lem:hy20}
  For any non-negative integer $t$ with $t < nd/(2n-d)$ and $y \in \Sigma^{n+t}$,
  it holds that
  \[|D_t(y) \cap C| \leq \frac{(n+t)d}{(n+t)d - 2nt}.\]
\end{lemma}
\begin{proof}
  In \cite{HY20}, $B_\mathsf{L}(v, t_\I, t_\D)$ is defined as the set of words that can be obtained from $v$
  by at most $t_\I$ insertions and at most $t_\D$ deletions.
  Then, for $C \subseteq \Sigma^n$ and $y \in \Sigma^{n+t}$,
  we can see that $B_\mathsf{L}(y, 0, t) \cap C = D_t(y) \cap C$.
  Lemma~1 of \cite{HY20} with $t_\I=t$ and $t_\D=0$ implies that
  if $d/2 > t n/(n+t)$,
  $|D_t(y) \cap C| \leq (n+t)(d/2)/((n+t)d/2 - tn)$ for every $y \in \Sigma^{n+t}$,
  which implies the statement.
\end{proof}
By combining (\ref{eq:ave}) and Lemma~\ref{lem:hy20}, the statement follows.
\end{proof}

We analyze asymptotics of Theorems~\ref{thm:spbound} and~\ref{thm:bound}.
For a code $C \subseteq \Sigma^n$ of distance $d$, let $\delta = d/2n$ and $\gamma = t/n$ for $t \geq 0$.
Let $A_q(n,d)$ be the maximum size of code $C \subseteq \Sigma^n$ of minimum Levenshtein distance $d$, where $|\Sigma| = q$.
For  $\delta \in [0,1]$, let
\[ R_q(\delta) \triangleq \liminf_{n \to \infty}\frac{\log_q A_q(n,\lfloor 2\delta n \rfloor)}{n}\]
be the asymptotic coding rate achievable for normalized  Levenshtein distance $\delta$.
Note that $R_q(\delta) = 0$ for $\delta \geq 1-q^{-1}$ (See, for example, \cite[Section~1]{BGH17}).

Let $\mathrm{Vol}_q(n, \ell)$ be the volume of the Hamming ball of radius $\ell$ in $\F_q^n$.
Namely, $\mathrm{Vol}_q(n,\ell) = \sum_{i=0}^\ell \binom{n}{i}(q-1)^i$.
It is well-known (cf.~\cite[Lemma 4.8]{Roth06}) that, for $0 \leq \ell \leq n$,
\[\mathrm{Vol}_q(n, \ell) \geq \frac{1}{n+1}\cdot q^{nH_q(\ell/n)},\]
where $H_q(x) = -x \log_qx - (1-x)\log_q(1-x)+x\log_q(q-1)$.
Since $I_q(n, t) = \mathrm{Vol}_q(n+t, t)$, 
\[\frac{1}{n} \cdot \log_q I_q(n,t) \geq (1+\gamma)H_q\left(\frac{\gamma}{1+\gamma}\right) - \frac{\log_q((1+\gamma)n+1)}{n}.\]

Regarding Theorem~\ref{thm:spbound}, it holds that
\[ \frac{1}{n} \cdot \log_q I_q(n, d/2-1) \geq (1+\delta) \cdot H_q\left(\frac{\delta}{1+\delta} - o(1)\right) - o(1). \]
By Theorem~\ref{thm:spbound}, the rate $R$ of $C$ is bounded above by
\begin{align*}
R = \frac{\log_q |C|}{n} & \leq (1+\delta)\left( 1 - H_q\left(\frac{\delta}{1+\delta} - o(1) \right)\right) + o(1).
\end{align*}

Thus, we have the following corollary.
\begin{corollary}\label{cor:simple_ub}
  \[ R_q(\delta) \leq (1+\delta)\left( 1 - H_q\left(\frac{\delta}{1+\delta}\right)\right).\]
\end{corollary}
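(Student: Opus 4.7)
The plan is to read off the corollary directly from the inequality derived in the two displays immediately preceding its statement, so the bulk of the work is a clean asymptotic passage rather than any new combinatorics. More precisely, specialize Theorem~\ref{thm:spbound} to a code $C \subseteq \Sigma^n$ of minimum Levenshtein distance $d = \lfloor 2\delta n\rfloor$ whose cardinality attains $A_q(n,d)$. Taking logarithms and invoking the volume lower bound on $I_q(n, d/2 - 1) = \mathrm{Vol}_q(n + d/2 - 1, d/2 - 1)$ already stated in the excerpt gives, for every sufficiently large $n$,
\[\frac{\log_q A_q(n, \lfloor 2\delta n\rfloor)}{n} \;\leq\; (1+\delta)\Bigl(1 - H_q\Bigl(\tfrac{\delta}{1+\delta} - o(1)\Bigr)\Bigr) + o(1),\]
where the $o(1)$ terms collect the contributions of the $\lfloor\cdot\rfloor$ rounding, the $-1$ in $d/2-1$, and the $\frac{1}{n}\log_q((1+\delta)n+1)$ slack from the volume estimate.

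Next I would invoke continuity of the $q$-ary entropy function $H_q$ on $[0, 1-q^{-1}]$ to pass the inner $o(1)$ outside of $H_q$. One small check is needed here: the argument $\delta/(1+\delta)$ must lie in the domain of continuity. For $\delta \in [0,1)$ we have $\delta/(1+\delta) < 1/2 \leq 1 - q^{-1}$ for every $q \geq 2$, so $H_q$ is continuous at this point and the substitution is justified. After this step the inequality simplifies to
\[\frac{\log_q A_q(n, \lfloor 2\delta n\rfloor)}{n} \;\leq\; (1+\delta)\Bigl(1 - H_q\Bigl(\tfrac{\delta}{1+\delta}\Bigr)\Bigr) + \epsilon\]
for any fixed $\epsilon > 0$ and all $n$ sufficiently large depending on $\epsilon$.

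Finally I would take $\liminf_{n\to\infty}$ of both sides, which by definition produces $R_q(\delta)$ on the left and leaves the right-hand side unchanged up to the additive $\epsilon$. Letting $\epsilon \to 0^+$ yields the claimed inequality. The statement for the trivial range $\delta \geq 1 - q^{-1}$ need not be treated separately, since $R_q(\delta) = 0$ there as noted in the excerpt and the right-hand side remains non-negative.

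I do not expect any real obstacle: the inequality is essentially a one-step corollary of Theorem~\ref{thm:spbound} combined with the standard entropy estimate on Hamming-ball volumes. The only mild care points are verifying that $\delta/(1+\delta)$ lies in the continuity range of $H_q$ and that the floor in the definition of $A_q$ contributes only an $O(1/n)$ perturbation to $\delta$, both of which are routine.
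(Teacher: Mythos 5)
Your proposal is correct and follows essentially the same route as the paper: apply Theorem~\ref{thm:spbound}, lower-bound $I_q(n,d/2-1)=\mathrm{Vol}_q(n+d/2-1,d/2-1)$ via the standard entropy estimate, and pass to the limit using continuity of $H_q$. The additional checks you note (continuity at $\delta/(1+\delta)$ and the $O(1/n)$ effect of the floor) are exactly the routine details the paper absorbs into its $o(1)$ terms.
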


Regarding Theorem~\ref{thm:bound},
condition (\ref{eq:cond}) 
can be rewritten as
\begin{equation}
  \gamma < \frac{\delta}{1-\delta}.
\end{equation}

Let $\gamma = \delta/(1-\delta) - 1/n$. 
The bound (\ref{eq:bound}) can be rewritten as
\begin{align*}
  |C| & \leq \frac{(1+\gamma)\delta}{(1+\gamma)\delta - \gamma}\cdot \frac{q^{(1+\gamma)n}}{I_q(n,t)}\\
  & = \frac{\delta/(1-\delta) - \delta/n}{(1-\delta)/n} \cdot \frac{q^{(1+\gamma)n}}{I_q(n,t)}\\
  & = q^{(1+\gamma)n\left(1 - H_q\left(\frac{\gamma}{1+\gamma}\right)+o(1)\right)}.
\end{align*}
Since
\[ \frac{\gamma}{1+\gamma} = \frac{\delta/(1-\delta)-1/n}{1/(1-\delta)-1/n} = \delta - \frac{(1-\delta)^2}{n-(1-\delta)},\]
the rate $R$ of $C$ is bounded above by
\begin{align*}
  R = \frac{\log_q |C|}{n} & \leq (1+\gamma) \left(1 - H_q\left(\frac{\gamma}{1+\gamma}\right)\right) + o(1)\\
  & \leq \frac{1}{1-\delta}\left(1 - H_q\left(\delta - o(1)\right)\right) + o(1).
\end{align*}
We obtain the following corollary.
\begin{corollary}\label{cor:ub}
  \[
  R_q(\delta) \leq \frac{1}{1-\delta} \left( 1 - H_q(\delta)\right)
  \]
\end{corollary}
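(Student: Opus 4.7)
The plan is to derive Corollary~\ref{cor:ub} as the asymptotic limit of Theorem~\ref{thm:bound}, choosing the free parameter $t$ as close as possible to the boundary of condition~(\ref{eq:cond}) so as to maximize the amount of list-decoding slack we exploit. Concretely, I would set $\gamma = t/n$ and, since the admissible range is $\gamma < \delta/(1-\delta)$, pick $\gamma = \delta/(1-\delta) - 1/n$, i.e., $t = \lfloor n\delta/(1-\delta)\rfloor - 1$, which lies just inside the allowed region for all sufficiently large $n$.

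Next, I would estimate the two factors in the right-hand side of (\ref{eq:bound}) separately. For the combinatorial factor $q^{n+t}/I_q(n,t)$, I would apply the identity $I_q(n,t) = \mathrm{Vol}_q(n+t,t)$ together with the standard lower bound $\mathrm{Vol}_q(m,\ell) \geq (m+1)^{-1} q^{m H_q(\ell/m)}$ already recalled in the text, which gives
\begin{equation*}
\frac{1}{n}\log_q \frac{q^{n+t}}{I_q(n,t)} \leq (1+\gamma)\left(1 - H_q\!\left(\frac{\gamma}{1+\gamma}\right)\right) + o(1).
\end{equation*}
For the list-size prefactor $(n+t)d/((n+t)d - 2nt)$, a direct substitution shows $(1+\gamma)\delta - \gamma = (1-\delta)/n$ and $(1+\gamma)\delta = \delta/(1-\delta) - \delta/n$, so the prefactor is $O(n)$ and contributes only an $o(1)$ term after taking $(1/n)\log_q$.

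Combining these two estimates yields $R \leq (1+\gamma)(1 - H_q(\gamma/(1+\gamma))) + o(1)$. To finish, I would compute the limits $1+\gamma \to 1/(1-\delta)$ and $\gamma/(1+\gamma) = \delta - (1-\delta)^2/(n - (1-\delta)) \to \delta$, and invoke continuity (and monotonicity on $[0, 1 - q^{-1}]$) of $H_q$ to replace $H_q(\gamma/(1+\gamma))$ by $H_q(\delta) - o(1)$. Passing to $\liminf$ in the definition of $R_q(\delta)$ then gives the claimed inequality.

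The main obstacle, such as it is, is the careful bookkeeping of the $o(1)$ terms near the boundary $\gamma = \delta/(1-\delta)$: the list-size prefactor blows up like $n$ exactly when $t$ is chosen to saturate (\ref{eq:cond}), so one must verify that this blow-up is only polynomial and therefore absorbed after normalizing by $1/n$. For $\delta$ outside the ``trivial'' range (i.e., $\delta < 1 - q^{-1}$) the argument above is straightforward; for $\delta \geq 1 - q^{-1}$ the right-hand side is at most $0$ automatically, matching the fact that $R_q(\delta) = 0$ in that regime.
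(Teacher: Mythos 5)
Your proposal is correct and follows essentially the same route as the paper: substitute $\gamma = \delta/(1-\delta) - 1/n$ into Theorem~\ref{thm:bound}, observe that the list-size prefactor is only $O(n)$ because $(1+\gamma)\delta - \gamma = (1-\delta)/n$, bound $I_q(n,t)$ via $\mathrm{Vol}_q(n+t,t)$ and the entropy lower bound, and pass to the limit using $\gamma/(1+\gamma) \to \delta$. The only (harmless) additions are your explicit treatment of the integrality of $t$ and of the trivial regime $\delta \geq 1 - q^{-1}$.
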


\subsection*{Bounds from Hamming-Metric Bounds}

For a code $C \subseteq \Sigma^n$, let $d_h$ be the minimum Hamming distance of $C$.
It is well-known that the minimum Levenshtein distance $d$ of $C$ satisfies $d \leq 2d_h$.
Thus, whenever we have an upper bound for a code of minimum Hamming distance $d_h$,
we can use the same bound for a code of minimum Levenshtein distance $d_h/2$.
More specifically, if a code $C \subseteq \Sigma^n$ satisfies $|C| \leq f(q,n,d_h)$, then 
$|C| \leq f(q,n, d/2)$ as long as $f$ is a monotonically decreasing function on the third argument.
Let $\delta = d/2n$ and $\delta_h = d_h/n$.
Similarly, if we have a bound on the coding rate $R \leq g(q,\delta_h)$,
then we have $R_q(\delta) \leq g(q,\delta)$ if $g$ is monotonically decreasing on the second argument.

As far as we know, the best-known upper bounds for coding rate with respect to normalized minimum Levenshtein distance are
obtained by the bounds for Hamming metric.
The following bounds are well-known in the literature.
\begin{theorem}
  For $0 \leq \delta < \theta = 1 - q^{-1}$,
  \begin{align*}
    R_q(\delta)  & \leq 1 - H_q\left( \theta - \sqrt{\theta(\theta - \delta)}\right); & \text{(Elias bound)}\\
    R_q(\delta) & \leq H_q\left( \frac{1}{q}\left( q - 1 - (q-2)\delta - 2\sqrt{\delta(1-\delta)(q-1)}\right)\right). & \text{(MRRW bound)}
  \end{align*}
\end{theorem}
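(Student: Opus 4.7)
The plan is simply to transfer the classical Hamming-metric bounds to the Levenshtein setting via the reduction already sketched in the paragraph preceding the theorem. The key inequality is that, for any two codewords $c_1,c_2$ of Hamming distance $k$, one can transform $c_1$ into $c_2$ by deleting the $k$ mismatched symbols and inserting the corresponding symbols of $c_2$, so $\ldis(c_1,c_2)\le 2k$. Minimizing over pairs of distinct codewords gives $d\le 2d_h$ for every code, i.e.\ $d_h\ge d/2$.

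Given $C\subseteq \Sigma^n$ of minimum Levenshtein distance $d=2\delta n$, this forces $C$ to have minimum Hamming distance $d_h\ge \delta n$, equivalently $\delta_h\ge \delta$. Invoking the classical Elias and MRRW bounds in the Hamming metric (see, e.g., \cite{Roth06}), the rate of $C$ satisfies
\[
R \le g_E(q,\delta_h) \triangleq 1 - H_q\!\left(\theta - \sqrt{\theta(\theta-\delta_h)}\right)
\]
and
\[
R \le g_M(q,\delta_h) \triangleq H_q\!\left(\tfrac{1}{q}\!\left(q-1-(q-2)\delta_h-2\sqrt{\delta_h(1-\delta_h)(q-1)}\right)\right).
\]

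The only remaining step is to check that both $g_E(q,\cdot)$ and $g_M(q,\cdot)$ are non-increasing on $[0,\theta)$, so that the bounds at $\delta_h$ can be weakened to bounds at $\delta \le \delta_h$. For $g_E$, as $\delta_h$ grows, $\sqrt{\theta(\theta-\delta_h)}$ shrinks, so $\theta-\sqrt{\theta(\theta-\delta_h)}$ increases toward $\theta$; since $H_q$ is strictly increasing on $[0,\theta]$, $g_E(q,\delta_h)$ decreases (from $1$ at $\delta_h=0$ to $0$ at $\delta_h=\theta$). For $g_M$, a short calculation shows that the argument of $H_q$ strictly decreases in $\delta_h$ on $[0,\theta)$, running from $\theta$ at $\delta_h=0$ down to $0$ at $\delta_h=\theta$, and composing with the increasing $H_q$ on $[0,\theta]$ again gives monotonicity. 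Combining, $R\le g_E(q,\delta_h)\le g_E(q,\delta)$ and $R\le g_M(q,\delta_h)\le g_M(q,\delta)$, which are the two stated bounds.

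The argument is essentially mechanical, and I do not anticipate a substantive obstacle: the Hamming-metric Elias and MRRW bounds are applied as black boxes, and the only point that requires care is the monotonicity check, which reduces to standard properties of $H_q$ and of the so-called JPL expression appearing inside the MRRW formula.
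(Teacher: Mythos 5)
Your proposal is correct and follows exactly the route the paper intends: the paper gives no separate proof of this theorem, relying on the reduction $d\le 2d_h$ (hence $\delta_h\ge\delta$) sketched in the preceding paragraph together with the classical Elias and MRRW bounds applied as black boxes. Your explicit verification that both bound functions are non-increasing in $\delta_h$ on $[0,\theta)$ is the one detail the paper leaves implicit, and it checks out.
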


\section{Code Size Lower Bounds}

Next, we consider lower bounds on $A_q(n,d)$ for codes $C \subseteq \Sigma^n$ with $|\Sigma|=q$.
We assume that $d$ is even.
For $x \in \Sigma^n$ and  non-negative integers $t, s$ with $t \leq n$,
let $L_{t, s}(x)$ be the set of words that can be obtained from $x$ by deleting $t$ symbols and inserting $s$ symbols.
By definition, it holds that $L_{t,0}(x) = D_t(x)$ and $L_{0,s}(x) = I_s(x)$.

We would like to derive an upper bound on the average size of $L_{t, t}(x)$ for $x \in \Sigma^n$ and $t \leq n$.
This is because it gives a lower bound on $A_{q}(n,d)$ as discussed in \cite{Tol97,Lev02}.
More specifically, for any $X \subseteq \Sigma^n$, let
\[ V_t^\mathrm{ave}(X) \triangleq \frac{1}{|X|} \sum_{x \in X} |L_{t,t}(x)| \]
be the average size of $L_{t,t}(x)$ in $X$.
Then, there exists a code $C \subseteq \Sigma^n$ of minimum Levenshtein distance $d$ satisfying
\begin{align}
  |C| \geq \frac{|X|}{V_{d/2 - 1}^\mathrm{ave}(X)} = \frac{|X|^2}{\sum_{x \in X} |L_{d/2-1,d/2-1}(x)|}.\label{eq:avelower}
\end{align}

For $x = (x_1, \dots, x_n) \in \Sigma^n$ and $i \in [1, n-1]$,
we say $(x_i,x_{i+1})$ is a \emph{distinct adjacent pair} if $x_i \neq x_{i+1}$.
Let $IP(x) = \{ (i, {i+1}) : x_i \neq x_{i+1} \}$ be the index-pair set of distinct adjacent pairs in $x$.
We say $P \subseteq IP(x)$ is \emph{disjoint} if
$P = \{({i_1},{i_1+1}), \dots, ({i_p}, {i_p+1})\}$
satisfies $i_j+1 \neq i_\ell$ for every distinct $j, \ell \in \{1, \dots, p\}$.
We denote by $p(x)$ the maximum size of disjoint index-pair sets  of distinct adjacent pairs in $x$.
For example, for $x = 01101$, we have $IP(x) = \{ (1,2), (3,4), (4,5)\}$.
The sets $\{(1,2), (3,4)\}$ and $\{(1,2), (4,5)\}$ are disjoint, but $\{(3,4),(4,5)\}$ is not.
We have $p(01101) = 2$.
For consistency, we consider the \emph{leftmost} maximum-sized disjoint index-pair set in $x$.
In the case that $x = 01101$, the leftmost one is $\{(1,2), (3,4)\}$.
Similarly, for $y = 010100$ and $y' = 010111$, the (leftmost) maximum-sized disjoint index-pair set is $\{ (1,2), (3,4)\}$,
and hence $p(y) = p(y') = 2$.

We observe that for any positive integer $p' \leq p(x)$ satisfying $p' \leq \min\{t, n-t\}$,
there are $2^{p'}$ words in $D_t(x)$ such that they are within a Levenshtein distance of $2p'$ from each other.
The reason is as follows.
Since $p' \leq p(x)$, there are $n - 2p'$ indices in $\{1, \dots, n\}$ such that
they are not contained in a disjoint index-pair set of distinct adjacent pairs in $x$.
First, we delete $t - p'$ symbols from $x$ out of the $n - 2p'$ indices.
This procedure requires that $p' \leq \min\{t,n-t\}$.
Let $y \in \Sigma^{n-t+p'}$ be the resulting word.
Second, we delete one of the two symbols in $p'$ distinct adjacent pairs from $y$.
There are $2^{p'}$ possible deletion patterns,
resulting in all different words of length $n-t$.
Since each resulting word and $y$ have Levenshtein distance $p'$,
the $2^{p'}$ words are within a Levenshtein distance of $2p'$.
In other words, $D_t(x)$ contains $2^{p'}$ words $\{z_1, \dots, z_{2^{p'}}\}$
such that all are subsequences of $y \in \Sigma^{n - t + p'}$.
We have
\begin{align*}
  \left|\bigcup_{i = 1}^{2^{p'}} I_t(z_i)\right|
  & \leq |I_t(z_1)| + \left|\bigcup_{i = 2}^{2^{p'}} (I_t(z_i) \setminus I_{t-p'}(y))\right|\\
  & \leq I_q(n-t, t) + (2^{p'} - 1) (I_q(n-t,t) - I_q(n-t+p', t-p'))\\
  & = 2^{p'} \cdot I_q(n-t,t) - (2^{p'} - 1) \cdot I_q(n-t+p', t-p').
\end{align*}
Let $Z = \{z_1, \dots, z_{2^{p'}}\}$.
For any integer $p' \leq p(x)$ with $p' \leq \min\{t,n-t\}$, it holds that
\begin{align*}
  |L_{t,t}(x)| & = \left| \bigcup_{z \in D_t(x)}I_t(z) \right| \\
  & \leq \sum_{z \in D_t(x) \setminus Z} |I_t(z)| +  \left| \bigcup_{z \in Z} I_t(z) \right| \\
  & \leq  (|D_t(x)| - 2^{p'}) \cdot I_q(n-t,t) +2^{p'} \cdot I_q(n-t,t) - (2^{p'} - 1) \cdot I_q(n-t+p', t-p')\\
  & = |D_t(x)| \cdot I_q(n-t,t) - (2^{p'} - 1) \cdot I_q(n-t+p', t-p') \triangleq L^*_{t,p'}(x).
\end{align*}

Note that $p'$ should satisfy $p' \leq \min\{ p(x), t, n-t\}$.
Since  $x \in \Sigma^n$ may have $p(x) = 0$,
it is impossible to choose $p' > 0$ such that $p' \leq p(x)$ for every $x \in \Sigma^n$.
Thus, we consider a subset $X \subseteq \Sigma^n$ and its partition $X = X_0 \cup X_1$ such that
$X_1 = \{ x \in X : p(x) \geq p' \}$ for some $p' \in [0, \min\{t,n-t\}]$ with $p' > 0$.
Then, as an upper bound on $|L_{t,t}(x)|$,
we can use $L^*_{t,p'}(x)$ for $x \in X_1$ and $|D_t(x)|\cdot I_q(n-t,t)$ for $x \in X_0$.

For integers $n$ and $p \in [0, n/2 ]$,
let $N_{n,q}(p)$ be the number of words $x \in \Sigma^n$ such that $p(x) = p$, where $|\Sigma| = q$.
The value of $N_{n,q}(p)$ is given by Theorem~\ref{thm:npq} below.
When $X_1$ is given by $\{ x \in \Sigma^n : p(x) \geq p' \}$,
it holds that $|X_1| = \sum_{p=p'}^{\lfloor n/2 \rfloor} N_{n,q}(p)$.

For a partition $X = X_0 \cup X_1$ with $X_1 = \{ x \in \Sigma^n : p(x) \geq p' \}$,
we have that
\begin{align}
  \sum_{x \in X} |L_{t,t}(x)| 
  & \leq \sum_{x \in X} L^*_{t,p'}(x) \nonumber\\
  & \leq \sum_{x \in X} |D_t(x)| \cdot I_q(n-t,t) - \sum_{x \in X_1}(2^{p'} - 1) \cdot I_q(n-t+p', t-p') \nonumber\\
  & = \sum_{x \in X} |D_t(x)| \cdot I_q(n-t,t) - (2^{p'} - 1) \cdot I_q(n-t+p', t-p')\sum_{p=p'}^{\lfloor n/2 \rfloor} N_{n,q}(p).\label{eq:Lupper}
\end{align}
We use (\ref{eq:Lupper}) for our lower bound of Theorem~\ref{thm:newlowerbound}.

We determine $N_{n,q}(p)$ for $p \in [0, n/2]$.

\begin{theorem}\label{thm:npq}
  For integers $n > 0$, $p \in [0, n/2]$, and $q \geq 2$, 
  \[
  N_{n,q}(p) = \binom{n-p}{p}\frac{q^p(q-1)^p}{n-p}(p + q(n-2p)).
  \]
\end{theorem}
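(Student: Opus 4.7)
The plan is to parameterize each word $x \in \Sigma^n$ with $p(x) = p$ by the output of a leftmost greedy matching and count via the resulting block decomposition. The first step is to argue that the greedy algorithm---scan positions $j = 1, 2, \ldots$; whenever $x_j \neq x_{j+1}$ add $(j, j+1)$ to the matching and advance $j$ by $2$, otherwise advance $j$ by $1$---produces a matching of size exactly $p(x)$. This optimality follows from a short exchange-and-induction argument: if $x_1 \neq x_2$, any optimal matching may be assumed to contain $(1,2)$ (swap it in for $(2,3)$ without loss if needed), so one recurses on $(x_3, \ldots, x_n)$; while if $x_1 = x_2$ one recurses directly on $(x_2, \ldots, x_n)$.

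Second, fix greedy's pairs $(i_1, i_1+1), \ldots, (i_p, i_p+1)$ on such an $x$ and decompose $x$ into blocks $B_0 = (x_1, \ldots, x_{i_1})$, $B_j = (x_{i_j+2}, \ldots, x_{i_{j+1}})$ for $1 \leq j \leq p-1$, $B_p = (x_{i_p+2}, \ldots, x_n)$, together with singletons $b_j = x_{i_j+1}$. The conditions which characterize this parameterization (equivalently, $p(x) = p$) are: each block is constant (let $a_j$ denote its symbol); $b_j \neq a_{j-1}$ (so that the $j$-th pair is distinct); and the lengths $\ell_j = |B_j|$ satisfy $\ell_0, \ldots, \ell_{p-1} \geq 1$, $\ell_p \geq 0$, and $\sum_j \ell_j = n - p$. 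Crucially, no relation between $a_j$ and $b_j$ is imposed.

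Finally, count by cases on $\ell_p$. When $\ell_p \geq 1$ there are $\binom{n-p-1}{p}$ compositions of $n-p$ into $p+1$ positive parts and $q^{p+1}(q-1)^p$ symbol assignments ($q$ choices per block symbol and $q-1$ per singleton); when $\ell_p = 0$ the counts become $\binom{n-p-1}{p-1}$ and $q^p(q-1)^p$. Summing,
\[
N_{n,q}(p) = q^p(q-1)^p \Bigl[q \binom{n-p-1}{p} + \binom{n-p-1}{p-1}\Bigr],
\]
and combining the two binomial coefficients over the common denominator $p!(n-2p)!$ rewrites the bracket as $\frac{(n-p-1)!}{p!(n-2p)!}(p + q(n-2p)) = \frac{\binom{n-p}{p}}{n-p}(p + q(n-2p))$, which matches the claimed formula. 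I expect the main obstacle to be the bijection verification in the second step: one must confirm that $B_p$ is indeed forced to be constant (otherwise the distinct pair $(i_p+2, i_p+3)$ would be disjoint from all matched pairs, contradicting $p(x) = p$) while also confirming that $b_j$ is genuinely unconstrained relative to $a_j$---the only consequence of $b_j \neq a_j$ being a size-preserving swap of pairs rather than an extension of the matching.
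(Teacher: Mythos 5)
Your proof is correct and follows essentially the same route as the paper: your block decomposition ($B_{j-1}$ constant, followed by the singleton $b_j$) is exactly the paper's factorization $w_1a_1a_1'\cdots w_pa_pa_p'w_{p+1}$ with $B_{j-1}=w_ja_j$ and $b_j=a_j'$, and the case split on $\ell_p=0$ versus $\ell_p\geq 1$ with counts $\binom{n-p-1}{p-1}$ and $q\binom{n-p-1}{p}$ matches the paper's two cases on $|w_{p+1}|$. The only substantive addition is your explicit exchange argument that the leftmost greedy matching attains $p(x)$, which the paper takes for granted.
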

\begin{proof}
We count the number of words $x \in \Sigma^n$ such that $p(x) = p$.
When $p(x) = 0$, $x$ should be a repetition of the same symbol in $\Sigma$.
Thus, $N_{n,q}(0) = q$.
Hereafter, we assume $p > 0$.
Let $(i_1, i_1+1), \dots, (i_p, i_p+1) \in \{1, \dots, n\}^2$ be $p$ index-pairs of distinct adjacent pairs such that $i_1 \leq \cdots \leq i_p$.
Let $(a_1, a_1'), \dots, (a_p, a_p') \in \Sigma^2$ be the corresponding distinct adjacent pairs,
where their concrete values are yet determined.
Since the order of $p$ pairs $(a_i, a_i')$ is fixed, we can construct all words $x$ with $p(x) = p$
by inserting $n-2p$ symbols into the word $a_1a_1'a_2a_2'\cdots a_pa_p'$.
There are $p+1$ possible places to which symbols can be inserted.
Namely, the resulting word should be of the form $w_1 a_1a_1' w_2 a_2a_2' w_3 \cdots w_p a_pa_p'w_{p+1}$, where $w_i \in \Sigma^*$.
Here, we consider the leftmost maximum-sized disjoint index-pair set in $x$.
By the property of distinct adjacent pairs, $w_1$ must be the empty string or repetition of $a_1$.
Similarly, for $i \in\{ 2, \dots, p\}$, $w_i$ must be the empty string or repetition of $a_i$.
Since we consider the \emph{leftmost} disjoint index-pairs,
different from the previous cases, $w_{p+1}$ can be the empty string or repetition of \emph{any} fixed symbol in $\Sigma$.

First, suppose that $w_{p+1}$ is the empty string.
Then the number of words $x$ with $p(x) = p$ is determined by
the number of possible lengths of $w_1, \dots, w_p$ and possible combinations $(a_i,a_i')$.
Since the rightmost symbols are fixed to be $a_pa_p'$,
the number of possible lengths is equal to
the the number of ways of selecting $p-1$ items from $n - 2p + (p-1)$ items, which is $\binom{n -  p-1}{p-1}$.
There are $q(q-1)$ combinations for each $(a_i, a_i')$.
Hence, the number $N_{n,q}(p)$ when $|w_{p+1}|= 0$ is equal to $q^p(q-1)^p \binom{n-p-1}{p-1}$.
Second, consider the case that $|w_{p+1}|  > 0$.
In this case, we count the number of possible lengths of $w_1, \dots, w_p, w_{p+1}$ and possible combinations $(a_i,a_i')$.
Since $|w_{p+1}| > 0$, we can assume that one symbol (length) has been assigned to $w_{p+1}$.
Then, the number of possible lengths is equal to
the number of ways of selecting $p$ items from $n - 2p - 1 + p$ items, which is $\binom{n- p -1}{p}$.
As in the previous case, the number of combinations for $(a_i, a_i')$'s are $(q(q-1))^p$,
and there are $q$ possible symbols corresponding to $w_{p+1}$.
Thus, the number $N_{n,q}(p)$ when $|w_{p+1}| > 0$ is $q^{p+1}(q-1)^p \binom{n-p-1}{p}$.

Therefore, for $p \geq 1$, we have
\begin{align*}
  N_{n,q}(p)
  & = q^p(q-1)^p\binom{n-p-1}{p-1} + q^{p+1}(q-1)^p \binom{n-p-1}{p}\\
  & = q^p(q-1)^p\binom{n-p}{p}\frac{p}{n-p} + q^{p+1}(q-1)^p \binom{n-p}{p}\frac{n-2p}{n-p}\\
  & = \binom{n-p}{p}\frac{q^p(q-1)^p}{n-p}(p + q(n-2p)).
\end{align*}
When $p=0$, the above gives $N_{n,q}(0) = q$. Hence, the statement follows.
\end{proof}

By using (\ref{eq:avelower})  and (\ref{eq:Lupper}), we have the following theorem.

\begin{theorem}\label{thm:newlowerbound}
  For a set $X \subseteq \Sigma^n$ and its partition $X = X_0 \cup X_1$ such that
  $X_1 = \{ x \in \Sigma^n : p(x) \geq p' \}$ for some $p' \in [0, \min\{t,n-t\}]$,
  it holds that
  \[A_q(n,d) \geq \left\lfloor \frac{|X|^2}{\sum_{x \in X} |D_t(x)| \cdot I_q(n-t,t) -(2^{p'} - 1) \cdot I_q(n-t+p', t-p') \sum_{p=p'}^{\lfloor n/2 \rfloor} N_{n,q}(p)}\right\rfloor,\]
  where $t = d/2 - 1$. 
\end{theorem}

We give an explicit expression by choosing $X = \Sigma^n$ and $p'=1$, 
\begin{corollary}\label{cor:ip_lb}
\[A_q(n,d) \geq \left\lfloor \frac{q^{n}}{q^{-t} \cdot I_q(n-t,t)^2
    - (1 - q^{-n+1}) I_q(n-t+1, t-1)}\right\rfloor,\]
  where $t = d/2 - 1 > 0$.
\end{corollary}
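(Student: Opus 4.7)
The plan is to specialize Theorem~\ref{thm:newlowerbound} by taking $\tilde{p} = 1$ for every $p \geq 1$, in place of the $p$-dependent optimal choice $\tilde{p}(n,q,t,p)$. Since $t = d/2 - 1 \geq 1$ by hypothesis and (as always here) $n > t$, the value $\tilde{p} = 1$ lies in the feasible range $[0, \min\{p, t, n-t\}] \cap \N$ for every $p \geq 1$, so it is an admissible, though generally suboptimal, choice. Using it in place of the argmax makes $(2^{\tilde{p}} - 1)\, I_q(n-t+\tilde{p}, t-\tilde{p})$ no larger than at the maximizer; this only \emph{increases} the denominator in Theorem~\ref{thm:newlowerbound}, and hence only \emph{weakens} the lower bound on $A_q(n,d)$. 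The resulting inequality therefore remains valid.

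For $p = 0$, the admissible range forces $\tilde{p} = 0$, and the factor $2^{\tilde{p}} - 1 = 0$ annihilates that summand in any case. Thus the sum in Theorem~\ref{thm:newlowerbound} collapses to
\[
I_q(n-t+1,\, t-1) \cdot \sum_{p=1}^{\lfloor n/2 \rfloor} N_{n,q}(p).
\]
To evaluate the remaining combinatorial sum, I would note that $p(x)$ is well-defined for every $x \in \Sigma^n$ and takes values in $[0,\lfloor n/2 \rfloor]$, so $\sum_{p=0}^{\lfloor n/2 \rfloor} N_{n,q}(p) = q^n$. Plugging $p = 0$ into Theorem~\ref{thm:npq} gives $N_{n,q}(0) = \binom{n}{0} \cdot \frac{1}{n} \cdot (0 + q n) = q$ (which matches the observation that only the $q$ constant strings have no distinct adjacent pair). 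Hence $\sum_{p \geq 1} N_{n,q}(p) = q^n - q$.

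Substituting back, the subtracted term in the denominator of Theorem~\ref{thm:newlowerbound} becomes
\[
q^{-n}(q^n - q)\, I_q(n-t+1,\, t-1) \;=\; (1 - q^{-n+1})\, I_q(n-t+1,\, t-1),
\]
which is precisely the denominator in the statement of the corollary. The argument involves no genuine obstacle; the only things worth verifying carefully are that $\tilde{p} = 1$ is feasible for every $p \geq 1$ under the hypothesis $t > 0$, and that swapping in a suboptimal value of $\tilde{p}$ preserves the direction of the inequality in Theorem~\ref{thm:newlowerbound}.
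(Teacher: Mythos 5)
Your proposal is correct and follows essentially the same route as the paper: fix the admissible (suboptimal) choice $\tilde{p}=1$ for every $p\geq 1$ (valid since $t>0$ and $t<n$), use $N_{n,q}(0)=q$ so that $\sum_{p\geq 1}N_{n,q}(p)=q^n-q$, and observe that replacing the argmax only weakens the bound while preserving its validity. No gaps.
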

\begin{proof}
  When $X = \Sigma^n$, it follows from (\ref{eq:leveq}) that $\sum_{x \in X} |D_t(x)| \cdot I_q(n-t,t)  = q^{n-t} \cdot I_q(n-t,t)^2$.
  By choose $p'=1$, since $N_{n,q}(0)=q$ and $\sum_{p=0}^{\lfloor n/2 \rfloor}N_{n,q}(p) = q^n$, we have
  \begin{align*}
   (2^{p'} - 1) \cdot I_q(n-t+p', t-p') \sum_{p=p'}^{\lfloor n/2 \rfloor} N_{n,q}(p) 
    & = (q^n - q) \cdot I_q(n-t+1, t -1).
  \end{align*}
  By combining the above and Theorem~\ref{thm:newlowerbound}, the statement follows.
\end{proof}
Since Levenshtein~\cite{Lev02} gave a lower bound of $A_q(n,d) \geq q^n/(q^{-t}\cdot I_q(n-t,t)^2)$ for $t = d/2-1$,
Corollary~\ref{cor:ip_lb} improves his lower bound.
Its effect, however, vanishes when $d$ becomes large; Thus, the asymptotic lower bound on the coding rate given by Corollary~\ref{cor:ip_lb}
is the same as~\cite{Lev02}.

\section{Comparison}

First, we compare code-size upper bounds given by~\cite{Lev02,KK13} and Theorems~\ref{thm:spbound} and~\ref{thm:bound}.
Table~\ref{tb:upper_bounds} displays numerical values for several parameters $q$, $n$, and $d$.
Specifically, we use the bounds in \cite[Theorem~2]{Lev02}, where we take the minimum over parameter $r$,
and \cite[Corollary~4.2]{KK13}.
Though the non-asymptotic bound of~\cite{KK13} achieves the best when $d$ is small such as $4$ and $10$,
Theorem~\ref{thm:bound} is much better when $d$ is relatively large.

\begin{table*}[tbp]
  \caption{Upper Bounds on Code Size}\label{tb:upper_bounds}
  \medskip  \centering
  \begin{tabular}{ccc|rrrr}
    $q$ & $n$ & $d$ & UB of~\cite{Lev02} & UB of~\cite{KK13} & Theorem~\ref{thm:spbound} & Theorem~\ref{thm:bound}\\\hline
    
    2 & 20 & 4 & 97 453 & 55 206 & 95 325 & 181 643 \\
    2 & 20 & 10 & 26 456 & 2 535& 1 295 & 2 452 \\
    2 & 20 & 20 & 190 416 & 1 059 &  32 & 28 \\
    2 & 20 & 30 & 961 048 & --- & 5 & 4 \\

    2 & 40 & 4 & 47 498 012 376 & 28 192 605 878 & 52 357 696 560 & 102 167 009 660 \\
    2 & 40 & 10 & 1 279 636 864 & 9 880 934 & 117 292 187 & 228 473 245 \\
    2 & 40 & 20 & 1 122 371 648 & 3 203 459 & 215 900 & 203 859 \\

    2 & 40 & 30 & 13 097 807 352 & 298 539 & 3 735 & 1 195 \\
    2 & 40 & 40 & 287 193 094 240 & 1 048 713 & 231 & 43 \\

    
    4 & 20 & 4 & 30 003 945 118 & 19 289 677 788 & 68 719 476 736 & 90 174 299 388 \\
    4 & 20 & 10 & 360 221 648 & 25 002 768 & 306 647 351 & 316 287 316 \\
    4 & 20 & 20 & 536 774 720 & 1 645 315 & 1 258 226 &  79 926\\
    4 & 20 & 30 & 192 278 071 952 & --- & 34 771 &  71\\

    4 & 40 & 4 & $\approx$ 14 843 $\times 10^{18}$ & $\approx$ 10 332 $\times 10^{18}$ & $\approx$ 38 997 $\times 10^{18}$& $\approx$ 51 574  $\times 10^{18}$\\    
    4 & 40 & 10 & $\approx$ 6 113 $\times 10^{15}$ & $\approx$ 461 805 $\times 10^{12}$  & $\approx$ 27 238 $\times 10^{15}$ &            $\approx$ 36 015 $\times 10^{15}$ \\
    4 & 40 & 20 & $\approx$ 133 526 $\times 10^{12}$ & $\approx$ 158 374 $\times 10^{9}$ & $\approx$ 7 269 $\times 10^{12}$ &                       $\approx$ 2 172 $\times 10^{12}$ \\

    4 & 40 & 40 &      $\approx$ 34 641 $\times 10^{15}$ & $\approx$ 2 123 $\times 10^{9}$ &$\approx$ 173 431 $\times 10^{6}$ &                         $\approx$   69 $\times 10^{6}$ \\
    4 & 40 & 60 &  $\approx$ 306 026 $\times 10^{18}$ & ---  & 164 423 496 &                       108 \\

  \end{tabular}
\end{table*}

\begin{figure*}[ht]
    \centering
    \includegraphics[width=0.8\textwidth,pagebox=cropbox]{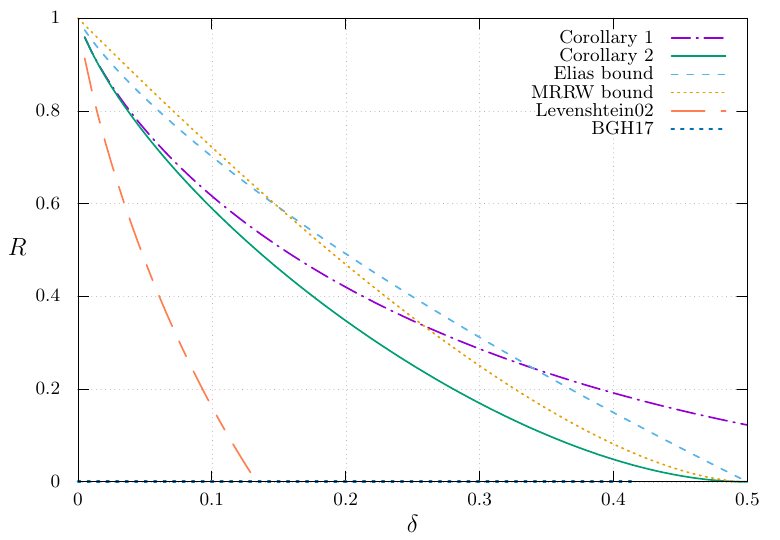}
    \caption{Bounds for $q=2$}
    \label{fig:bound_qis2}
\end{figure*}
\begin{figure*}[ht]
    \centering
    \includegraphics[width=0.8\textwidth,pagebox=cropbox]{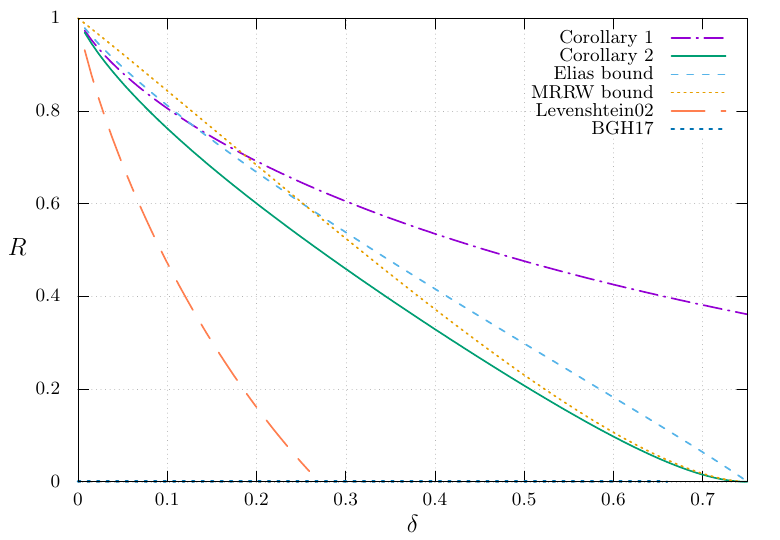}
    \caption{Bounds for $q=4$}
    \label{fig:bound_qis4}
\end{figure*}

Next, we investigate asymptotic behaviors.
Figures~\ref{fig:bound_qis2} and~\ref{fig:bound_qis4} show asymptotic bounds on  the rate function $R_q(\delta)$ for $q=2$ and $q=4$, respectively.
Corollary~\ref{cor:ub} gives the best upper bounds in both cases.
For larger $q$, Corollary~\ref{cor:ub} is inferior to the MRRW bound, especially for large $\delta$.
Regarding lower bounds, \cite{Lev02} gave a bound (displayed as Levenshtein02),
and \cite{BGH17} showed that there exist $q$-ary codes of rate $R$ that achieve
$\delta \geq 1 - (2/(q+ \sqrt{q}))$ for sufficiently small $R >0$ (displayed as BGH17).
Also, it is proved in~\cite{GHL21} that there is a small constant $\epsilon >0$ such that $R = 0$ for $\delta \geq 1 - (1+\epsilon)/q$
(omitted in the figures).
The asymptotic upper bounds of~\cite{Lev02,KK13} were not given in the closed form.
Hence it is not easy to make a clear comparison. 
By comparing with the plotted bound of~\cite[Fig.~1]{KK13}, we can see that Corollary~\ref{cor:ub} gives tighter bounds
on $\delta \geq 0.1$ for $q=2$ and $\delta \geq 0.2$ for $q=4$.

\begin{table*}[tbp]
  \caption{Lower Bounds on Code Size}\label{tb:lower_bounds}
  \medskip  \centering
  \begin{tabular}{ccc|rr}
    $q$ & $n$ & $d$ & \multicolumn{1}{r}{LB of~\cite{Lev02}}  & \multicolumn{1}{r}{Corollary~\ref{cor:ip_lb}} \\\hline
    2 & 20 & 6 & 94   & 94 \\
    2 & 20 & 8 & 4   & 4 \\
    2 & 40 & 6 & 6 524 894  & 6 526 482  \\
    2 & 40 & 8 & 76 814    & 76 818 \\
    2 & 40 & 10 & 1 687   & 1 687 \\
    4 & 20 & 6 & 5 608 964  & 5 610 710 \\
    4 & 20 & 8 & 66 412   & 66 419 \\

    4 & 40 & 6 & 379 316 355 894 427 152 & 379 330 757 315 377 297 \\
    4 & 40 & 8 & 1 031 317 510 055 795 &  1 031 323 792 762 824 \\
      4 & 40 & 10 & 5 251 871 945 006  & 5 251 878 194 182 
    \end{tabular}
\end{table*}

Table~\ref{tb:lower_bounds} shows the numerical values of the lower bounds of~\cite[Theorem~1]{Lev02} 
and Corollary~\ref{cor:ip_lb}.
As we can see, the improvement by Corollary~\ref{cor:ip_lb} is small.

\section{Conclusions}

This paper has presented improved upper and lower bounds on code size for correcting insertions and deletions.
In particular, our upper bound improves the existing bounds in both non-asymptotic and asymptotic senses.
An interesting future work is to develop an upper bound superior to the MRRW bound for large alphabet size $q$
and relative distance $\delta$.
Our bound is inferior in some range for $q \geq 5$.
A key lemma may be a list-size upper bound of insertions/deletion (as Lemma~\ref{lem:hy20}) employing $q$ explicitly in the bound.





\section*{Acknowledgment}
This work was supported in part by JSPS Grants-in-Aid for Scientific Research Number 18K11159.

\bibliographystyle{abbrv}
\bibliography{mybib}

\begin{thebibliography}{10}

\bibitem{BGH17}
B.~Bukh, V.~Guruswami, and J.~H{\aa}stad.
\newblock An improved bound on the fraction of correctable deletions.
\newblock {\em {IEEE} Trans. Inf. Theory}, 63(1):93--103, 2017.

\bibitem{CR21}
M.~Cheraghchi and J.~L. Ribeiro.
\newblock An overview of capacity results for synchronization channels.
\newblock {\em {IEEE} Trans. Inf. Theory}, 67(6):3207--3232, 2021.

\bibitem{CK14}
D.~Cullina and N.~Kiyavash.
\newblock An improvement to {L}evenshtein's upper bound on the cardinality of
  deletion correcting codes.
\newblock {\em {IEEE} Trans. Inf. Theory}, 60(7):3862--3870, 2014.

\bibitem{GHL21}
V.~Guruswami, X.~He, and R.~Li.
\newblock The zero-rate threshold for adversarial bit-deletions is less than
  1/2.
\newblock In {\em 62nd {IEEE} Annual Symposium on Foundations of Computer
  Science, {FOCS} 2021, Denver, CO, USA, February 7-10, 2022}, pages 727--738.
  {IEEE}, 2021.

\bibitem{GW17}
V.~Guruswami and C.~Wang.
\newblock Deletion codes in the high-noise and high-rate regimes.
\newblock {\em {IEEE} Trans. Information Theory}, 63(4):1961--1970, 2017.

\bibitem{HY20}
T.~Hayashi and K.~Yasunaga.
\newblock On the list decodability of insertions and deletions.
\newblock {\em {IEEE} Trans. Inf. Theory}, 66(9):5335--5343, 2020.

\bibitem{KMTU11}
I.~A. Kash, M.~Mitzenmacher, J.~Thaler, and J.~R. Ullman.
\newblock On the zero-error capacity threshold for deletion channels.
\newblock In {\em Information Theory and Applications Workshop, {ITA} 2011, San
  Diego, California, USA, February 6-11, 2011}, pages 285--289. {IEEE}, 2011.

\bibitem{KK13}
A.~A. Kulkarni and N.~Kiyavash.
\newblock Nonasymptotic upper bounds for deletion correcting codes.
\newblock {\em {IEEE} Trans. Inf. Theory}, 59(8):5115--5130, 2013.

\bibitem{Lev66}
V.~I. Levenshtein.
\newblock Binary codes capable of correcting deletions, insertions, and
  reversals.
\newblock {\em Soviet Physics Doklady}, 10(8):707--710, 1966.

\bibitem{Lev01}
V.~I. Levenshtein.
\newblock Efficient reconstruction of sequences from their subsequences or
  supersequences.
\newblock {\em J. Comb. Theory, Ser. {A}}, 93(2):310--332, 2001.

\bibitem{Lev02}
V.~I. Levenshtein.
\newblock Bounds for deletion/insertion correcting codes.
\newblock In {\em Proceedings IEEE International Symposium on Information
  Theory}, page 370, 2002.

\bibitem{MPVW21}
G.~Maringer, N.~A. Polyanskii, I.~V. Vorobyev, and L.~Welter.
\newblock Feedback insertion-deletion codes.
\newblock {\em Probl. Inf. Transm.}, 57(3):212--240, 2021.

\bibitem{Roth06}
R.~M. Roth.
\newblock {\em Introduction to coding theory}.
\newblock Cambridge University Press, 2006.

\bibitem{SZ99}
L.~J. Schulman and D.~Zuckerman.
\newblock Asymptotically good codes correcting insertions, deletions, and
  transpositions.
\newblock {\em {IEEE} Trans. Information Theory}, 45(7):2552--2557, 1999.

\bibitem{Tol97}
L.~M. G.~M. Tolhuizen.
\newblock The generalized {G}ilbert-{V}arshamov bound is implied by {T}uran's
  theorem.
\newblock {\em {IEEE} Trans. Inf. Theory}, 43(5):1605--1606, 1997.

\end{thebibliography}

\end{document}